\def\ps@headings{%
\def\@oddhead{\mbox{}\scriptsize\rightmark \hfil \thepage}%
\def\@evenhead{\scriptsize\thepage \hfil \leftmark\mbox{}}%
\def\@oddfoot{}%
\def\@evenfoot{}}
\newcommand{\remark}[1]{{\emph{Remark:}} #1}
\newcommand{\BEQA}{\begin{eqnarray}}
\newcommand{\EEQA}{\end{eqnarray}}
\newcommand{\define}{\stackrel{\triangle}{=}}
\newcommand{\gap}{\vspace{2mm}}
\newcommand{\Gt}{\tilde{G}}
\newcommand{\Rt}{\tilde{R}}
\newcommand{\Vt}{\tilde{V}}
\newcommand{\F}{\mathcal{F}}
\newcommand{\U}{\mathcal{U}}
\renewcommand{\P}{\mathcal{P}}
\newcommand{\mbar}{\overline{m}}
\newcommand{\Btilt}{\tilde{B}}
\newtheorem{theorem}{Theorem}
\newtheorem{proposition}{Proposition}
\newtheorem{lemma}{Lemma}
\newtheorem{definition}{Definition}
\newtheorem{corollary}{Corollary}
\title{QoS Constrained Optimal Sink and Relay Placement in Planned Wireless Sensor Networks}
\author{\IEEEauthorblockN{Abhijit Bhattacharya, Akhila Rao, 
    Naveen K. P., Nishanth P. P., \\
S.V.R. Anand, and Anurag Kumar} \IEEEauthorblockA{Dept. of
    Electrical Communication Engineering,
    Indian Institute of Science (IISc), Bangalore 560012, India.\\
    Email: \{abhijit, naveenkp, anand,
    anurag\}@ece.iisc.ernet.in, \{akhila.suresh.rao, nishanth.pp93\}@gmail.com}}
\begin{document}

\maketitle
\vspace{-8mm}
\begin{abstract}
\label{abstract}
We are given a set of sensors at given locations, a set of potential locations for placing base stations (BSs, or sinks), and another set of potential locations for placing wireless relay nodes. There is a cost for placing a BS and a cost for placing a relay. The problem we consider is to select a set of BS locations, a set of relay locations, and an association of sensor nodes with the selected BS locations, so that number of hops in the path from each sensor to its BS is bounded by $h_{\max}$, and among all such feasible networks, the cost of the selected network is the minimum. The hop count bound suffices to ensure a certain probability of the data being delivered to the BS within a given maximum delay under a light traffic model. We observe that the problem is NP-Hard, and is hard to even approximate within a constant factor. For this problem, we propose a polynomial time approximation algorithm (SmartSelect) based on a relay placement algorithm proposed in our earlier work, along with a modification of the greedy algorithm for weighted set cover. We have analyzed the worst case approximation guarantee for this algorithm. We have also proposed a polynomial time heuristic to improve upon the solution provided by SmartSelect. Our numerical results demonstrate that the algorithms provide good quality solutions using very little computation time in various randomly generated network scenarios.
\end{abstract}

\begin{keywords}
  Wireless sensor network design; Multiple sink and relay placement; QoS-aware network design
\end{keywords}

\section{Introduction}
\label{sec:intro}

\subsection{Motivation and Problem Definition} 
Industrial monitoring and control applications typically have a large number of sensors distributed over hundreds of meters from the control center. Traditionally, the sensor readings are communicated to the control center via wired networks that are difficult to install and maintain. Therefore, recently there has been increasing interest in replacing these wireline networks with wireless packet networks (\cite{honey,isa,ge}). 

Owing to the small communication range of the sensing nodes (typically a few tens of meters depending on the RF propagation characteristics of the environment), usually multi-hopping is needed to communicate to the control center. The practical problem that we consider in this paper is the following:

\begin{enumerate}
\item There are already deployed, static sensors from which measurements, encapsulated into packets, need to be collected. We also refer to the sensors as \emph{sources.}
\item Additional relays and base stations (BS) need to be placed in the region in order to provide multi-hop paths from each of the sources to at least one BS. The sources can also act as relays for the packets from other sources. The network so obtained needs to provide certain quality-of-service (QoS) to the packets flowing over it, in terms of, e.g., delivery probability, or packet delay. 

In most practical applications, due to the presence of obstacles to radio propagation, or due to taboo regions, we cannot place relays and sinks anywhere in the region, but only at certain designated locations. This leads to the problem of \emph{constrained node placement} in which the nodes are constrained to be placed at certain \emph{potential locations}. Further, only certain links are permitted\footnote{this could be because some links could be too long, leading to high bit error rate and hence large packet delay, or due to an obstacle, e.g., a firewall}. See Figure~\ref{fig:constrained} for a depiction of the problem.

\begin{figure}[t]
\begin{center}
\includegraphics[scale=0.4]{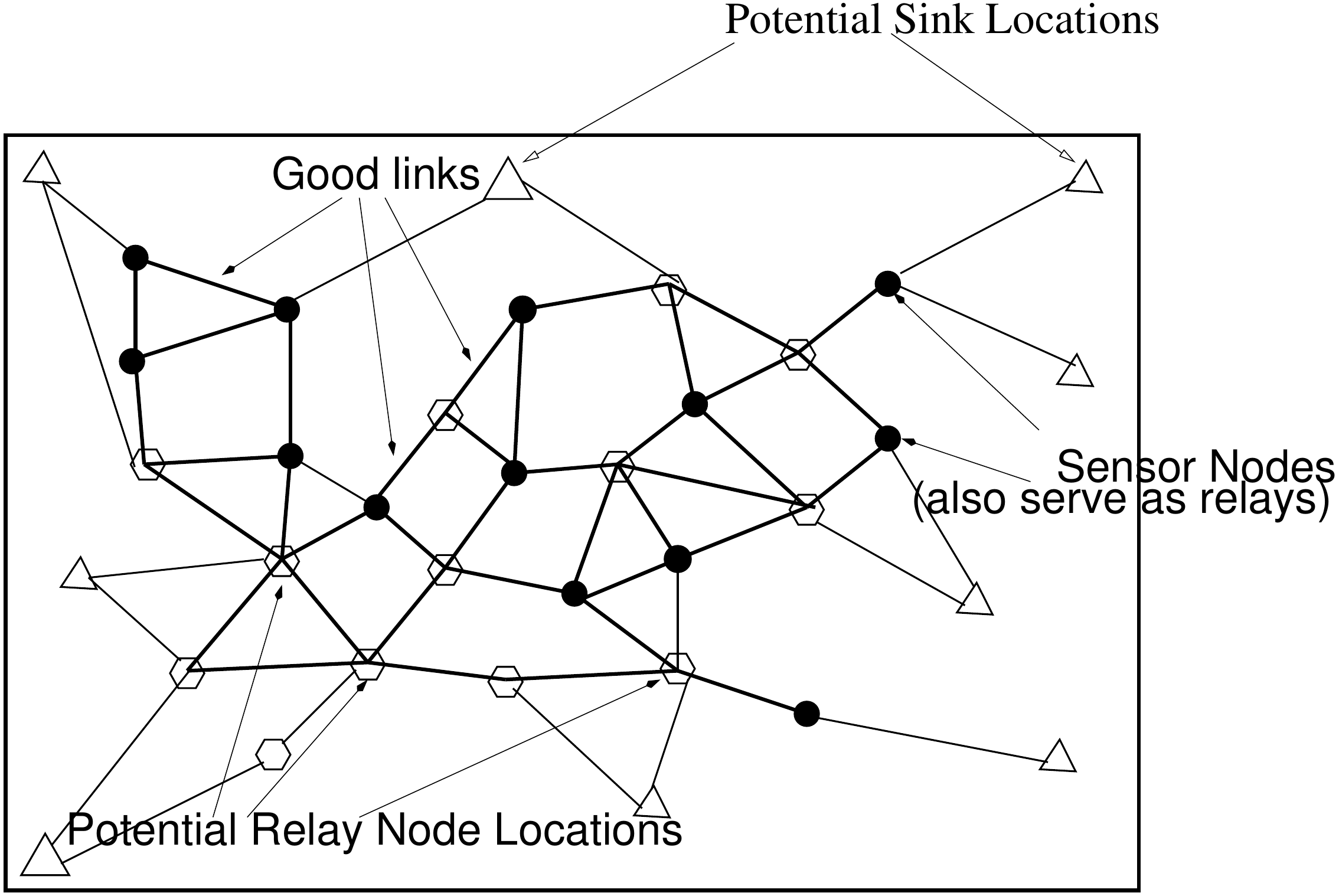}
\end{center}
\caption{The constrained sink and relay placement problem; circles indicate sources, hexagons indicate \emph{potential} relay locations, and triangles indicate \emph{potential} sink locations. The edges denote the useful links between the nodes.}
\label{fig:constrained}
\end{figure} 

\item We assume that there is a cost associated with each sink, and each relay. The objective of the design is to \emph{place a minimum cost selection of sinks and relays} (at the potential locations) while achieving a network that meets the following QoS objectives:
\begin{enumerate}
\item There is a path from each source node to at least one BS.
\item The maximum delay on any path is bounded by a given value $d_{\max}$, and the packet delivery probability (the probability of delivering a packet within the delay bound) on any path is $\geq p_{\mathrm{del}}$.
\end{enumerate}
\end{enumerate}

Note that in wireless networks, the actual link qualities are apriori unknown. Stochastic models for radio propagation are approximate, and therefore, designs based only on such models cannot be guaranteed to work when deployed on field. The ``usable'' network links (e.g., links that meet a certain target packet error rate requirement) can only be ascertained by field measurements. This motivates iterative algorithms that start with a model based design, make partial deployments and explore the field by making measurements, thereby updating the knowledge about on-field link qualities, and repeat the procedure until a design is obtained that satisfies the QoS requirements on field (see, for example, \cite{smartconnect-paper}). Such iterative algorithms require to invoke at each iteration, a sub-algorithm for network design, which takes as input, a network graph with \emph{given} link qualities, and outputs a subgraph that meets the QoS requirements assuming the given link qualities to be true. Since this network design module has to be invoked at each iteration, and there could be several such iterations before the final design is obtained, it is desirable that this module has \emph{low time complexity}. We therefore, seek \emph{fast} algorithms that provide optimal or close to optimal solutions. In our earlier work \cite{iwqos,fullpaper,smartconnect-paper}, we addressed this problem for the special case where there is a single BS at a designated location. In this paper, we study the more general problem where there is the possibility of deploying multiple sinks. This general problem is of interest for scalability of the network design, i.e., when we want to deploy a network over an area so large that a single sink based solution fails to meet the desired QoS requirements (for example, since in low power wireless networks the usable link lengths are limited, with just one sink the number of hops from some of the sources to the sink can become so large as to make the packet delivery probability unacceptably small).
 
\subsection{Organization of the Paper}
The rest of the paper is organized as follows: in Section~\ref{sec:formulation}, we describe the problem formulation, discuss the complexity of the problem, and provide a brief summary of related literature. Section~\ref{sec:smartselect} presents a polynomial time approximation algorithm (SmartSelect) for the problem, and provides a worst case analysis of the algorithm. We also provide a polynomial time heuristic to improve upon the solution obtained using SmartSelect. In Section~\ref{sec:results}, we provide extensive numerical results for our algorithms applied to a set of random scenarios. Finally, we conclude the paper in Section~\ref{sec:conclude}.

\section{The Network Design Problem}
\label{sec:formulation}

\subsection{The Lone Packet Model}
In this paper, we address the problem for the case where the traffic from the source nodes is \emph{very light}. Formally, we define ``light traffic'' as follows: for any set of links (transmitter-receiver pairs) whose transmissions interfere with one another, only one of the links is active (i.e., carries a packet) in the network at any point of time. We call this the ``lone packet traffic model,'' which is realistic for many applications, including the so called \emph{condition monitoring/industrial telemetry} applications (\cite{telemetry1,telemetry2}), where the time between successive measurements being taken is sufficiently long so that the measurements can be staggered so as not to occupy the medium at the same time. The motivation behind the lone-packet model comes from the following important result (formally proved in \cite{fullpaper}): for a design (network) to satisfy the QoS objectives for a given positive arrival rate (continuous traffic), it is \emph{necessary} that the network satisfies the QoS objectives under zero/light traffic load, i.e., the lone packet model. Hence, we cannot hope to solve the general problem of QoS aware network design for positive traffic arrival rates unless we have a reasonably good solution to the more basic problem of lone packet based network design. Also note that such designs based on lone-packet model can be used as a starting point for network design with more general arrival processes. Packet level simulation results reported in \cite{fullpaper} suggest that such lone-packet model based designs suffice for some small positive traffic arrival rates (as are typical of non-critical monitoring and control applications). For a more detailed discussion on the applicability and justification of the lone-packet model, see \cite{fullpaper}. 
 
\subsection{The Network Design Setting}
\label{subsec:network-setting}
In this subsection, we discuss how we can map the packet level QoS objectives into graph level objectives under the lone packet model. Given a set of source nodes or required vertices $Q$, a set of potential relay locations $R$, each with cost $c_r$, and a set of potential sink locations $B$, each with cost $c_s$, we consider a graph $G= (V, E)$ on $V= Q\cup R\cup B$ with $E$ consisting of all \emph{feasible} edges. 

Throughout this work, we assume that all nodes operate at the same fixed power level. We can then define the set of \emph{feasible} edges $E$, either by imposing a bound on the packet error rate (PER) of each link, or alternately, by constraining the maximum allowed link length (which, in turn, affects the link PER). Having thus characterized the link quality of each feasible link in the graph $G$, it can be shown by an elementary analysis that the QoS objectives ($d_{\max}$ and $p_{\mathrm{del}}$) can be met by imposing a hop count bound of $h_{\max}$ between each source node and the sink.  Details of this analysis are provided in \cite{fullpaper}, where we have considered the practical situation of slowly fading links, and packet losses due to random channel errors. Thus, there is a random delay at each hop due to packet retransmissions, and packets could be dropped if a retransmission limit is reached. Note that as a consequence of the lone packet assumption, the delay along a path is additive, i.e., it is simply the sum of the delays on each hop along the path. The analysis uses statistical models (which can be obtained from field measurements) for link errors and random fading, as well as parameters of the wireless physical layer and the medium access control used by the system (e.g., the back-off parameters of the IEEE~802.15.4 CSMA/CA \cite{IEEE}). 

\subsection{Problem Formulation}
\label{subsec:formulation}

Given the graph $G= (V, E)$ on $V= Q\cup R\cup B$ with $E$ consisting of all \emph{feasible} edges (as explained in Section~\ref{subsec:network-setting}), costs $c_s$ and $c_r$ of each sink and relay respectively, and a hop constraint $h_{\max}$, \emph{the problem is to extract from this graph, a minimum cost (cost of selecting sinks and relays) subgraph spanning $Q$, such that each source has a path to at least one sink with hop count $\leq h_{max}$.} We call this the \emph{MultiSink Steiner Network-Minimum Cost-Hop Constraint} (MSSN-MC-HC) problem.\footnote{In this formulation, we have not taken into account the energy expenditure at a node due to transmission and reception since, in a light traffic setting, the fraction of time a node is transmitting or receiving a packet is small compared to the idle time of the node.}

\subsection{Complexity of the Problem}

\begin{proposition}
\begin{enumerate}
\item \textbf{Complexity: }The MSSN-MC-HC problem is NP-Hard. 
\item \textbf{Inapproximability: }The MSSN-MC-HC problem is not constant factor approximable. In particular, it cannot be approximated to within a factor better than $O(\log(m))$, where $m$ is the number of sensors.
\end{enumerate}
\end{proposition}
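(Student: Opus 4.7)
The plan is to exhibit a cost-preserving reduction from weighted minimum set cover to MSSN-MC-HC, which will immediately yield both the NP-Hardness and the $\Omega(\log m)$ inapproximability. Given a set cover instance with universe $U = \{u_1,\ldots,u_m\}$, sets $S_1,\ldots,S_n \subseteq U$, and weights $w_1,\ldots,w_n$, I would construct the following MSSN-MC-HC instance: let $Q = \{q_1,\ldots,q_m\}$ (one source per element), let $B = \{b_1,\ldots,b_n\}$ (one potential sink per set) with costs $c_{b_j}=w_j$, take $R=\emptyset$, include the edge $(q_i,b_j)$ in $E$ iff $u_i \in S_j$, and finally set $h_{\max}=1$.

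Under this reduction, a feasible MSSN-MC-HC solution is simply a subset $B' \subseteq B$ of selected sinks such that every source is adjacent to some sink in $B'$; no relay can ever appear on a feasible path since the hop bound is $1$. By construction, $B'$ is feasible iff the corresponding subcollection $\{S_j : b_j \in B'\}$ covers $U$, and the cost of the MSSN-MC-HC solution equals the weight of the cover. Hence the optimal values coincide, and any $\alpha$-approximate solution for the constructed MSSN-MC-HC instance translates in polynomial time into an $\alpha$-approximate weighted set cover of $U$. NP-Hardness of MSSN-MC-HC therefore follows from the NP-Hardness of set cover, and the inapproximability claim follows from the classical result that weighted set cover over a universe of size $m$ cannot be approximated to within a factor $c\log m$ (for some absolute constant $c>0$) unless $\mathrm{P}=\mathrm{NP}$.

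The only mild subtlety will be making sure the reduction respects the formal statement of the problem (e.g., if the paper implicitly requires a nonempty relay set, one can add dummy relays with no incident edges or with prohibitive cost, neither of which alters the optimum). I do not expect a genuine obstacle: the main conceptual observation is simply that the one-hop special case of MSSN-MC-HC already subsumes weighted set cover, so the hardness and the logarithmic inapproximability are inherited directly. The parameter $m$ in the inapproximability bound is precisely the number of sources, which matches the size of the set cover universe under the reduction.
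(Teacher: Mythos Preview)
Your approach is essentially the same as the paper's: reduce set cover to the sink-selection special case of MSSN-MC-HC, then invoke the known NP-hardness and $\Omega(\log m)$ inapproximability of set cover. The paper implements this slightly differently---it restricts to the subclass with $c_r=0$ (so relays are free and $h_{\max}$ can be arbitrary) rather than your choice of $R=\emptyset$ and $h_{\max}=1$---but the substance is identical.

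One small point to fix: in the paper's formulation every sink has the \emph{same} cost $c_s$ (and every relay the same cost $c_r$), so you cannot set $c_{b_j}=w_j$ for arbitrary weights. You should reduce from \emph{unweighted} set cover instead; this changes nothing, since unweighted set cover is already NP-hard and already carries the $\Omega(\log m)$ inapproximability threshold. With that adjustment your argument goes through and matches the paper's.
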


\begin{proof}
\begin{enumerate}

\item The proof is via a restriction argument \cite[p. 63, Section 3.2.1]{Garey}, and a reduction from set cover. Consider the subclass of MSSN-MC-HC problems where $c_r = 0$. The problem now is to select a minimum number of sinks from the set $B$ such that each source is connected to at least one sink with hop count $\leq h_{\max}$. We call this the \emph{sink placement} problem.  

Consider now an instance of the set cover problem where we are given a set of elements, say $Q=\{q_1,\ldots, q_m\}$, and a set $\tilde{B}$ of subsets of $Q$, say $\Btilt=\{\tilde{b}_1,\ldots, \tilde{b}_n\}$. The problem is to select a minimum number of sets from $\Btilt$ such that their union covers $Q$. We can \emph{reduce this set cover problem to an instance of the sink placement problem as follows}: associate a sensor node with each element of $Q$, and a potential sink node, say $b_k$, with each element $\tilde{b}_k$ of $\Btilt$, $k=1,\ldots,n$. If $\tilde{b}_k = \{q_{k1},\ldots,q_{k,m_k}\}$, then connect the sensors $q_{k1},\ldots,q_{k,m_k}$ to the sink $b_k$, using at most $h_{\max}$ hops. Note that since $c_r=0$, we are free to use as many relays as we wish. Thus, the set cover problem reduces to that of selecting a minimum number of sinks from the set $B\define\{b_1,\ldots,b_n\}$, such that each sensor in $Q$ is covered by (i.e., connected via $\leq h_{\max}$ hops to) at least one sink, which is precisely an instance of the sink placement problem. 

Now, since we know that the set cover problem is NP-Hard, and since any instance of the set cover problem can be reduced to an instance of the sink placement problem as above, it follows that \emph{the sink placement problem is NP-Hard}. Hence, by the restriction argument, it follows that MSSN-MC-HC problem is NP-Hard. 

\item Moreover, note that the above reduction from set cover problem to sink placement problem is \emph{approximation preserving}, i.e., if we have an $\alpha$-approximation algorithm for the sink placement problem, it will give an $\alpha$-approximation for the set cover problem. But it is also known that the best possible approximation factor for the set cover problem is $O(\log(m))$, where $|Q|=m$. Hence, it follows that the best possible approximation factor for the sink placement problem is $O(\log(m))$. Since the class of sink placement problems is a strict subset of the class of MSSN-MC-HC problems, it follows that the MSSN-MC-HC problem cannot be approximated to within a factor of $O(\log(m))$.
\end{enumerate}
\end{proof}

\begin{proposition}
If $\frac{c_s}{c_r}\geq \mbar(\mbar+1)(h_{\max}-1)$, for some $\mbar\in\mathbb{N}$, $m > \mbar \geq 1$, then the worst case approximation guarantee of any algorithm for the MSSN-MC-HC problem is upper bounded by $m\left(1+\frac{1}{\mbar(\mbar + 1)}\right)$, where $|Q|=m$. 
\end{proposition}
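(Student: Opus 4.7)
The plan is to exhibit an explicit trivially constructed feasible solution and compare its cost against a simple lower bound on the optimum; the point is that when the sink cost dominates the relay cost by the stated factor, paying for a full $h_{\max}$-hop path to a sink per source becomes essentially ``free'' relative to the cost of the sinks themselves.

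First I would lower bound the optimum: any feasible solution must select at least one sink (otherwise no source can be connected to a sink at all), so $\text{OPT} \geq c_s$. Next I would construct a specific feasible solution by choosing, for each source $q_i \in Q$, some $(\leq h_{\max})$-hop path from $q_i$ to a sink in $B$, and taking the union of these $m$ paths. Such paths exist whenever the instance is feasible. The resulting subgraph uses at most $m$ distinct sinks (one per chosen path) and at most $m(h_{\max}-1)$ distinct relays (each path has at most $h_{\max}-1$ intermediate nodes), so its total cost is at most $m c_s + m(h_{\max}-1)c_r$. Because this construction describes a polynomial-time algorithm returning a feasible network, the best-achievable worst-case approximation ratio is bounded by
$$ \frac{m c_s + m(h_{\max}-1)c_r}{c_s} \;=\; m + \frac{m(h_{\max}-1)c_r}{c_s}. $$

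Finally, substituting the hypothesis $c_s/c_r \geq \mbar(\mbar+1)(h_{\max}-1)$ gives $m(h_{\max}-1)c_r/c_s \leq m/(\mbar(\mbar+1))$, so the total ratio is at most $m\bigl(1 + 1/(\mbar(\mbar+1))\bigr)$, as claimed. There is no real obstacle in this argument; the proposition simply formalizes the observation that in the ``expensive sink'' regime, the naive one-path-per-source construction already yields a guarantee that is essentially just the trivial upper bound $m$ (one sink per source) inflated by a vanishing $1/(\mbar(\mbar+1))$ term, so the $O(\log m)$ inapproximability of Proposition~1 does not bite meaningfully once $c_s$ is large enough relative to $c_r$.
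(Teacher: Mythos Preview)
Your proposal is correct and follows essentially the same argument as the paper: lower bound the optimum by $c_s$ (at least one sink is needed), upper bound the cost of any algorithm's output by the trivial solution that connects each of the $m$ sources to its own sink via a relay-disjoint $(h_{\max}-1)$-relay path, and then apply the hypothesis on $c_s/c_r$ to simplify the ratio. The only cosmetic difference is that you spell out the construction of the trivial feasible solution slightly more explicitly than the paper does.
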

\begin{proof}
Observe that the cost of the optimal solution to the MSSN-MC-HC problem is lower bounded by $c_s$ (since at least one sink is required), and the cost of the outcome of any algorithm is upper bounded by $mc_s\: +\: m(h_{\max}-1)c_r$ (corresponding to the case where each source is connected to a different sink using relay-disjoint paths, each path having $h_{max}-1$ relays). Hence, the worst case approximation guarantee of any algorithm is upper bounded by

\begin{align}
\frac{mc_s\: +\: m(h_{\max}-1)c_r}{c_s}&= m\:+\:m(h_{\max}-1)\frac{c_r}{c_s}\nonumber\\
&\leq m\:+\:\frac{m}{\mbar(\mbar + 1)}\nonumber\\
&= m\left(1+\frac{1}{\mbar(\mbar + 1)}\right)\nonumber
\end{align}
\end{proof}

\subsection{Related Work}
Several variations of the optimal node placement problem have been studied in the literature. \cite{Lin,cheng,Lloyd} studied variations of the problem of \emph{unconstrained relay placement for connectivity and/or survivability}, where a minimum number of relays have to be placed in a two-dimensional region (they can be placed anywhere; no \emph{potential} locations are given) to obtain a tree spanning a given set of \emph{required} vertices (sources and BS). No QoS constraint was imposed in their formulations. They showed the problems to be NP-Hard, and proposed approximation algorithms. Bredin et al.\ \cite{bredin} studied the problem of optimal relay placement \emph{(unconstrained)} for $k-$connectivity. They proposed an $O(1)$ approximation algorithm for the problem with any \emph{fixed} $k\geq 1$. However, they also did not impose any QoS constraint. 

\cite{Misra,yang} studied variations of the \emph{constrained relay placement problem} for connectivity and survivability. Although their formulations take into account an edge length bound, which can model the link quality, the formulation does not involve a path constraint such as the hop count along the path; hence, there is \emph{no constraint on the end-to-end QoS.} They showed that the problems are NP-Hard, and proposed $O(1)$ (respectively, $O(\ln n)$) approximation algorithms. 

In \cite{iwqos,fullpaper}, we have studied the problem of \emph{constrained} relay placement \emph{with end-to-end QoS objective}, but with a single Base Station at a given location. Our formulation converts the QoS objective into a hop constraint on each source-sink path under the lone-packet model. We showed that the problem is NP-Hard, and proposed a polynomial time approximation algorithm (SPTiRP). In \cite{smartconnect-paper}, we used the SPTiRP algorithm to build a  \emph{field-interactive}, iterative network design tool (SmartConnect) for designing QoS-aware networks. In our current work, we aim at extending this formulation to incorporate the possibility of deploying multiple sinks. 
  
Recently, Sitanayah et al.\ \cite{sitanayah} have aimed to address this problem of multiple sink and relay placement with hop constraint. However, they have proposed only a local search heuristic (GRASP-MSRP) of \emph{exponential} time-complexity for the problem. No theoretical study of either the problem, or the proposed algorithm was provided. Although their local search heuristic works well in practice, \emph{the complexity of the algorithm prohibits its use in an iterative network design process} such as SmartConnect\cite{smartconnect-paper}. Hence, we seek, instead, \emph{fast} heuristics that perform reasonably close to optimal. \emph{Note that our heuristics can also be used as an initialization in the GRASP-MSRP algorithm \cite{sitanayah} to improve its running time, thus yielding the best of both time and performance}.

\section{MSSN-MC-HC: A Heuristic and its Analysis}
\label{sec:smartselect}
In this section, we present a polynomial time approximation algorithm for the MSSN-MC-HC problem. The algorithm proceeds by reducing the problem to a modified version of the weighted set cover problem, and the greedy algorithm for weighted set cover is used to obtain a solution. Note that since the greedy algorithm for weighted set cover is polynomial time \cite{vazirani}, the proposed algorithm is polynomial time. 

\subsection{SmartSelect: A Greedy Algorithm for Sink and Relay Selection}

\begin{enumerate}
\item \label{step:zero-relay}\textbf{The single sink, zero relay case: }Consider the restriction of the graph $G$ to only the sources, $Q$, and the potential sinks, $B$. For each sink $b\in B$, find the shortest path tree rooted at $b$ spanning the sources in $Q$. If there exists a sink $b_0$ such that the SPT rooted at $b_0$ satisfies the hop constraint for each source in $Q$, then we are done; the optimal solution requires a single sink, and no relays. Otherwise, go to the next step.
\item \label{step:spt}\textbf{Checking feasibility: }On graph $G$ (i.e., now including all sources and relays), obtain a shortest path tree rooted at each potential sink location (i.e., we have as many shortest path trees as there are potential sinks). If there exists a source such that its shortest paths to all the sinks have lengths exceeding $h_{\max}$, declare the problem infeasible. Else, go to the next step.
\item For each sink $b_i\in B$, $i=1,\ldots,|B|$, identify the set of sources, say $Q_i$, whose shortest paths to $b_i$ have lengths $\leq h_{\max}$. The set $Q_i\subseteq Q$, is said to be \emph{covered} by $b_i$. Note that, having ensured feasibility in Step 2, $\cup_{i=1}^{|B|}Q_i=Q$. Also identify the set of relays, $R_i\subseteq R$, whose shortest paths to $b_i$ have lengths $\leq h_{\max}-1$ (this helps to reduce the complexity of Step 4; indeed $R_i$ is the set of relays that may ever be used to connect the sources in $Q_i$ to $b_i$). 
\item Set $j\leftarrow 0$. The iterations will be indexed by $j$. Set $Q_i^{(0)} = Q_i$, $B^{(0)}=B$. $B^{(j)}$ denotes the set of sinks not yet picked at the start of iteration $j$, $j\geq 0$, and $Q_i^{(j)}$ denotes the set of \emph{uncovered} sources that are associated with a BS $b_i\in B^{(j)}$ at the start of iteration $j$.

\gap
\noindent
\textbf{The greedy iterative algorithm:}
\item\label{step:sptirp} For each $i$ such that $b_i\in B^{(j)}$, let $G^{(j)}_i$ be the restriction of $G$ to $Q_i^{(j)}\cup R_i\cup\{b_i\}$. Run an algorithm (e.g., the SPTiRP algorithm \cite{fullpaper}) on $G^{(j)}_i$ to obtain a \emph{near-optimal} subset of relays, $\hat{R}^{(j)}_i\subseteq R_i$, that connect the sources in $Q^{(j)}_i$ to $b_i$ with $\leq h_{\max}$ hops. 
\item For each $i$ such that $b_i\in B^{(j)}$, consider the restriction of $G^{(j)}_i$ to $Q^{(j)}_i\cup \hat{R}^{(j)}_i\cup b_i$. Denote this graph by $\tilde{G}^{(j)}_i$.
\item For each $i$ such that $b_i\in B^{(j)}$, define the cost of $\tilde{G}^{(j)}_i$ as 
\begin{equation}
C_i^{(j)} = \frac{c_s+c_r\times |\hat{R}^{(j)}_i|}{|Q^{(j)}_i|}
\end{equation}
i.e., the cost of a subgraph is computed as the total cost per source.
\item \textbf{The greedy selection: }Pick the subgraph with the least cost among the subgraphs not yet picked, i.e., pick $\tilde{G}^{(j)}=\arg\min_{\tilde{G}^{(j)}_i}C^{(j)}_i$. Break ties by picking the subgraph that covers more sources. Let $\tilde{b}^{(j)}$ be the sink associated with $\tilde{G}^{(j)}$. 
\item Let $Q^{(j)}=Q\cap \tilde{G}^{(j)}$, $R^{(j)}=R\cap \tilde{G}^{(j)}$. If $\cup_{k=1}^{j}Q^{(k)}=Q$, \textbf{STOP}, i.e., stop when all the sources have been \emph{covered}. Else, go to next step.
\item \textbf{Update step: }Set $B^{(j+1)}=B^{(j)}\backslash \tilde{b}^{(j)}$. For each $i$ such that $b_i\in B^{(j+1)}$, set $Q_i^{(j+1)}=Q_i^{(j)}\backslash (\tilde{G}^{(j)}\cap Q_i^{(j)})$. Moreover, set the cost of each relay in $R^{(j)}$ to zero for all future iterations. This is done so that sources and relays that are shared by covers do not get counted more than once.
\item Set $j\leftarrow j+1$, and go to Step~\ref{step:sptirp}.  
\end{enumerate}

\subsection{Analysis of SmartSelect}
\subsubsection{Some observations}

\begin{enumerate}
\item If the optimal solution uses a single sink, and no relays, SmartSelect \emph{achieves the optimal solution} (follows from Step~\ref{step:zero-relay}).
\item For the sink placement problem (i.e., $c_r=0$), the SmartSelect algorithm reduces exactly to the greedy algorithm for weighted set cover, and hence \emph{achieves the best possible worst case approximation guarantee ($O(\log(m))$) for the sink placement problem}, where $m$ is the number of sources, i.e., $|Q|=m$.
\item For the subclass of problems with $c_s = 0$, the worst case approximation guarantee is $m(h_{\max}-1)$ (if SPTiRP algorithm is used in Step~\ref{step:sptirp}).
\end{enumerate}

\subsubsection{Worst Case Approximation Guarantee}
We already know that when an optimal solution uses a single sink, and \emph{no relays}, the SmartSelect algorithm gives the optimal solution. We, therefore, focus on instances where any optimal solution uses at least one sink, and \emph{at least one relay}. We start with the following lemma.

\begin{lemma}
\label{lem:greedy-lemma}
Suppose $\frac{c_s}{c_r}\geq \mbar(\mbar+1)(h_{\max}-1)$, for some $\mbar\in\mathbb{N}$, $m > \mbar \geq 1$.\footnote{this assumption is not too restrictive because in practice, cost of a base station is much more than cost of a relay node, since setting up a base station typically requires infrastructure such as uninterrupted power supply from mains, ethernet or Wifi connectivity to a backhaul network etc.} Then the following holds: In any iteration of the SmartSelect algorithm, if more than $\mbar$ sources remain to be covered, the algorithm cannot favor a sink covering at most $\mbar$ remaining sources over a sink that covers \emph{all} the remaining sources, provided such a sink exists.
\end{lemma}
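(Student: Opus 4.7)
The plan is to compare the greedy cost $C_i^{(j)} = (c_s + c_r|\hat R_i^{(j)}|)/|Q_i^{(j)}|$ for the two candidates in question: let $b_A$ denote a sink that covers all $N > \mbar$ remaining sources, and $b_B$ any sink covering at most $\mbar$ of them. I would show $C_A^{(j)} \le C_B^{(j)}$, and then appeal to the explicit tie-break in Step~8 (``break ties by picking the subgraph that covers more sources'') to handle the equality case.

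First, to upper bound $C_A^{(j)}$: the feasibility check in Step~2 guarantees that each source in $Q_A^{(j)}$ admits a path of at most $h_{\max}$ hops to $b_A$, and hence uses at most $h_{\max}-1$ relays. The union of one such shortest path per source is a feasible relay set of size $\le N(h_{\max}-1)$, and any reasonable near-optimal subroutine (e.g.\ SPTiRP) returns no more relays than this trivial construction, so $|\hat R_A^{(j)}| \le N(h_{\max}-1)$. Therefore
\[ C_A^{(j)} \;\le\; \frac{c_s}{N} + c_r(h_{\max}-1). \]
For the lower bound on $C_B^{(j)}$, the numerator contains $c_s$ in full (even if every relay $b_B$ would use has been zeroed out by earlier iterations), and the denominator is at most $\mbar$, giving $C_B^{(j)} \ge c_s/\mbar$.

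Combining these bounds, $C_A^{(j)} \le C_B^{(j)}$ reduces to $c_s/c_r \ge \mbar N(h_{\max}-1)/(N-\mbar)$. Viewed as a function of integer $N \ge \mbar+1$, the right-hand side is strictly decreasing in $N$ (its continuous extension has derivative $-\mbar^2(h_{\max}-1)/(N-\mbar)^2 < 0$), so the maximum is attained at $N=\mbar+1$ and equals exactly $\mbar(\mbar+1)(h_{\max}-1)$ --- the precise threshold in the hypothesis. Thus the hypothesis delivers $C_A^{(j)} \le C_B^{(j)}$ for every admissible $N$. In the strict case Step~8 picks $b_A$ directly; in the equality case the tie-break still picks $b_A$ because $N > \mbar$ means $b_A$ covers strictly more sources than $b_B$. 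Either way, $b_B$ is not favored over $b_A$, as claimed.

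The one point needing a moment of care is the bound $|\hat R_A^{(j)}| \le N(h_{\max}-1)$: it implicitly uses that the inner relay-selection routine does no worse than the trivial per-source shortest-path solution. Everything else is short algebra, and the specific form $\mbar(\mbar+1)(h_{\max}-1)$ of the cost-ratio threshold emerges naturally from the fact that the worst case in the maximization over $N$ occurs at $N=\mbar+1$, where the ``profit margin'' $1/\mbar - 1/N$ is smallest.
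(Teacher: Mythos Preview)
Your proof is correct and follows essentially the same route as the paper's: both compare the per-source greedy cost of the two candidate sinks, bound the relay count for the all-covering sink by $N(h_{\max}-1)$, and reduce the comparison to $c_s/c_r \ge \mbar N(h_{\max}-1)/(N-\mbar)$, whose maximum over integers $N \ge \mbar+1$ is exactly the hypothesized threshold $\mbar(\mbar+1)(h_{\max}-1)$. Your write-up is, if anything, a bit more careful than the paper's in explicitly invoking the Step~8 tie-break rule for the equality case and flagging the implicit assumption that the inner relay routine does no worse than the trivial per-source shortest-path construction.
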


\begin{proof}
Suppose, upto the start of iteration $j$, $j\geq 1$, $m^{(j)}$ sources have been covered, and $m-m^{(j)}> \mbar$. Suppose there exists a sink (not yet picked) that covers all the remaining sources, and another sink (not yet picked) that covers $\mbar$ of the remaining sources. We index these two sinks by 1 and 2 respectively. Let $n_1^{(j)}$, and $n_2^{(j)}$ be the number of relays picked in Step 4 of the SmartSelect algorithm in iteration $j$ for sinks 1 and 2 respectively. Then, the algorithm favors sink 2 over sink 1 iff
\begin{align}
\frac{c_s+n_1^{(j)}c_r}{m-m^{(j)}} &> \frac{c_s+n_2^{(j)}c_r}{\mbar}\nonumber\\
\Leftrightarrow \frac{c_s}{c_r}&< \frac{\mbar n_1^{(j)}-(m-m^{(j)})n_2^{(j)}}{m-m^{(j)}-\mbar}\label{eqn:greedy-failure}
\end{align}
Now, we make the following observations:
\begin{enumerate}
\item $n_1^{(j)}\leq (m-m^{(j)})(h_{\max}-1)$, since each of the remaining sources connects to the sink 1 using at most $h_{\max}$ hops.
\item $\frac{\mbar (m-m^{(j)})}{m-m^{(j)}-\mbar}\:=\:\frac{\mbar}{1-\frac{\mbar}{m-m^{(j)}}}\:\leq\:\mbar(\mbar + 1)$, where the last inequality follows since $m-m^{(j)}\geq \mbar+1$. 
\end{enumerate} 

Then we have that

\begin{align}
\frac{c_s}{c_r}&\geq \mbar(\mbar+1)(h_{\max}-1)\nonumber\\
&\geq \frac{\mbar(m-m^{(j)})(h_{\max}-1)}{m-m^{(j)}-\mbar}\nonumber\\
&\geq \frac{\mbar n_1^{(j)}}{m-m^{(j)}-\mbar}\nonumber\\
&\geq \frac{\mbar n_1^{(j)}-(m-m^{(j)})n_2^{(j)}}{m-m^{(j)}-\mbar}\nonumber
\end{align}

Thus, condition~\eqref{eqn:greedy-failure} cannot hold, and hence, the algorithm cannot favor sink 2 over sink 1.

Proceeding similarly as above, and noting that $f(\mbar)\define\mbar(\mbar+1)$ is monotonically increasing in $\mbar$, it can be shown that the algorithm cannot favor a sink covering less than $\mbar$ sources over sink 1. This completes the proof of the lemma.
\end{proof}

Equipped with the above lemma, we derive bounds on the worst case approximation factor of the SmartSelect algorithm as follows.

\begin{theorem}
\label{thm:smartselect-approx_special}
Consider the subclass of MSSN-MC-HC problems where there exists a feasible solution that uses exactly one sink. Further assume that $\frac{c_s}{c_r}\geq \mbar(\mbar+1)(h_{\max}-1)$, for some $\mbar\in\mathbb{N}$, $m > \mbar \geq 1$. Then the following hold:
\begin{enumerate}
\item The number of sinks picked by the SmartSelect algorithm is at most $\lceil\frac{m}{\mbar + 1}\rceil$.
\item The worst case approximation guarantee of the SmartSelect algorithm is upper bounded by $\epsilon + \frac{m}{\mbar}$, where $\epsilon\in[0,1)$ is such that $\lceil\frac{m}{\mbar + 1}\rceil=\frac{m}{\mbar + 1}+\epsilon$.
\end{enumerate} 
\end{theorem}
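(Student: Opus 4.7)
My plan is to prove the two parts of the theorem in sequence, using Lemma~\ref{lem:greedy-lemma} as the main lever.

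For part~1, the key observation I would exploit is that, under the hypothesis that a single-sink feasible solution exists, there is some sink $b^*\in B$ with $Q_{b^*}=Q$ (every source lies within $h_{\max}$ hops of $b^*$). This $b^*$ remains a candidate in every iteration until SmartSelect either picks it or terminates, and at the start of iteration $j$ its associated set $Q_{b^*}^{(j)}$ equals the set of all currently uncovered sources (since the $Q_i$'s in Step~3 are defined from hop-counts in the static graph $G$, independently of which relays or sinks have been picked earlier). With this ``fallback sink'' always present, Lemma~\ref{lem:greedy-lemma} immediately forces any iteration that begins with more than $\mbar$ uncovered sources to select a sink covering at least $\mbar+1$ of them. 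Iterating this argument, SmartSelect must terminate within $\lceil m/(\mbar+1)\rceil$ sink selections, which is part~1.

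For part~2, I would combine the sink bound from part~1 with a crude relay bound: across all iterations, each source contributes at most $h_{\max}-1$ relays on its path to its assigned sink (with shared relays counted only once thanks to the cost-zeroing update step), so the total number of relays selected is at most $m(h_{\max}-1)$. Since every feasible solution uses at least one sink, the optimal cost is at least $c_s$, and so the approximation ratio obeys
\[
\frac{\lceil m/(\mbar+1)\rceil c_s + m(h_{\max}-1)c_r}{c_s} \;=\; \frac{m}{\mbar+1}+\epsilon + m(h_{\max}-1)\frac{c_r}{c_s}.
\]
The hypothesis $c_s/c_r\ge \mbar(\mbar+1)(h_{\max}-1)$ bounds the last term by $m/(\mbar(\mbar+1))$, and the short algebraic simplification $\tfrac{m}{\mbar+1}+\tfrac{m}{\mbar(\mbar+1)} = \tfrac{m(\mbar+1)}{\mbar(\mbar+1)} = \tfrac{m}{\mbar}$ delivers the desired bound $\epsilon + m/\mbar$.

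The main obstacle I anticipate is not arithmetic but the rigorous verification in part~1 that $b^*$ genuinely continues to be eligible and to cover all remaining uncovered sources at every iteration. This is really a bookkeeping check: the eligibility set $Q_i$ from Step~3 depends only on shortest-path hop counts in $G$, and the update step only removes already-covered sources from $Q_{b^*}^{(j)}$ and zeros out the cost of previously-chosen relays, neither of which can make $b^*$ lose its ability to reach any remaining source within $h_{\max}$ hops. Once this is cleanly stated, Lemma~\ref{lem:greedy-lemma} applies directly at every iteration and the rest of the argument is mechanical.
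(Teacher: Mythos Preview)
Your proposal is correct and follows essentially the same route as the paper's proof. The only cosmetic difference is in part~2: the paper lower-bounds the optimum by $c_s+c_r$ (having set aside the single-sink--zero-relay case, where SmartSelect is exact), whereas you use the simpler bound $c_s$; since the paper immediately drops the $+c_r$ in the very next inequality, both computations collapse to the same expression $\frac{m}{\mbar+1}+\epsilon + m(h_{\max}-1)\frac{c_r}{c_s}$ and the same final bound. One small point you leave implicit (and the paper spells out, invoking the monotonicity of $\mbar(\mbar+1)$): when the number of uncovered sources first drops to some $k\le\mbar$, you still need to argue that the remaining sources are covered in a single further iteration---this follows by rerunning the Lemma~\ref{lem:greedy-lemma} comparison with $k$ in place of $m-m^{(j)}$ and $k'<k$ in place of $\mbar$, noting that $k'(k'+1)\le\mbar(\mbar+1)$, so the fallback sink $b^*$ still cannot be beaten by any sink covering fewer than all remaining sources.
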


\begin{proof}
\begin{enumerate}
\item Since there exists a feasible solution using exactly one sink, all the $m$ sources can be connected to that sink using $\leq h_{\max}$ hops. Hence, as long as this sink is not picked during the course of the SmartSelect algorithm, the hypothesis in Lemma~\ref{lem:greedy-lemma} continues to hold, i.e., there exists a sink that can cover \emph{all} the remaining sources. Moreover, as soon as this sink is picked (if at all), the algorithm terminates since all the sources get covered. Thus, by Lemma~\ref{lem:greedy-lemma}, as long as there are at least $\mbar+1$ sources remaining to be covered, the algorithm cannot pick any sink that covers at most $\mbar$ sources. Hence, the algorithm covers at least $\mbar + 1$ sources in each iteration until the number of remaining uncovered sources is $\leq \mbar$. Finally, when the number of uncovered sources is $\leq \mbar$, it follows (using Lemma~\ref{lem:greedy-lemma},and the monotonicity of $\mbar(\mbar + 1)$) that the algorithm covers these remaining sources using a single sink. Hence, the number of sinks picked by the algorithm is at most $\lceil\frac{m}{\mbar + 1}\rceil$.
\item Since an optimal solution uses at least one sink, and at least one relay \footnote{recall that for the ``single sink-zero relay'' case, SmartSelect gives optimal solution; hence we are not considering that case}, the optimum cost is at least $(c_s + c_r)$. Also, the number of relays used by the SmartSelect algorithm is trivially upper bounded by $m(h_{\max}-1)$. Thus, using the result from Part 1, the worst case approximation ratio is upper bounded as
\begin{align}
\text{Approx. ratio}&\leq\frac{\lceil\frac{m}{\mbar+1}\rceil c_s\:+\:m(h_{\max}-1)c_r}{c_s\:+\:c_r}\nonumber\\
&= \frac{(\frac{m}{\mbar + 1}+\epsilon) c_s/c_r\:+\:m(h_{\max}-1)}{1\:+\:c_s/c_r}\nonumber\\
&\leq \frac{(\frac{m}{\mbar + 1}+\epsilon)c_s/c_r\:+\:m(h_{\max}-1)}{c_s/c_r}\nonumber\\
&= (\frac{m}{\mbar + 1}+\epsilon)\:+\:\frac{m(h_{\max}-1)}{c_s/c_r}\nonumber\\
&\leq (\frac{m}{\mbar + 1}+\epsilon)\:+\:\frac{m}{\mbar(\mbar + 1)}\nonumber\\
&= \epsilon + \frac{m}{\mbar}\nonumber
\end{align}
where the last inequality follows since $\frac{c_s}{c_r}\geq \mbar(\mbar+1)(h_{\max}-1)$. 
\end{enumerate} 
\end{proof}

\begin{corollary}
Consider the subclass of MSSN-MC-HC problems where there exists a feasible solution that uses exactly one sink. Further, suppose there exists $\alpha\in (0,1]$ such that $c_s/c_r$ scales as $\frac{c_s}{c_r}\geq \lceil\alpha m\rceil (\lceil\alpha m\rceil + 1)(h_{\max}-1)$. Then, the worst case approximation factor of the SmartSelect algorithm is upper bounded by $(1+\frac{1}{\alpha})$, i.e., for this subclass of problems, SmartSelect provides an $O(1)$ approximation guarantee.
\end{corollary}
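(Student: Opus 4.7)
The plan is to obtain the corollary as a direct specialization of Theorem~\ref{thm:smartselect-approx_special} by making the choice $\mbar := \lceil \alpha m \rceil$. First I would verify that this choice satisfies the hypotheses of the theorem: $\mbar \in \mathbb{N}$ by construction; $\mbar \geq 1$ since $\alpha m > 0$; and $m > \mbar$ holds whenever $\alpha < 1$ (the corner case $\alpha = 1$ can be disposed of separately, since the cost ratio then dwarfs everything and any reasonable algorithm, including SmartSelect, returns a constant-factor solution). The cost condition $c_s/c_r \geq \mbar(\mbar+1)(h_{\max}-1)$ is exactly the scaling hypothesis of the corollary with this identification of $\mbar$.

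Next, I would invoke Theorem~\ref{thm:smartselect-approx_special} directly to conclude that the approximation ratio is upper bounded by $\epsilon + m/\mbar$, for some $\epsilon \in [0,1)$. To convert this bound into the desired form, I would use the elementary inequality $\lceil \alpha m\rceil \geq \alpha m$, which gives
\[
\frac{m}{\mbar} \;=\; \frac{m}{\lceil \alpha m\rceil} \;\leq\; \frac{m}{\alpha m} \;=\; \frac{1}{\alpha}.
\]
Combining this with $\epsilon < 1$ produces the announced bound $1 + 1/\alpha$, which is a constant independent of $m$ (for fixed $\alpha$).

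There is essentially no hard step here: the entire content lies in making the right substitution for $\mbar$ in the already-proved theorem, and in the one-line bound on $m/\lceil\alpha m\rceil$. The only bookkeeping to be careful about is checking that the theorem's hypothesis $m > \mbar \geq 1$ is not violated by the substitution, which, as noted above, is automatic for $\alpha \in (0,1)$ and needs at most a one-sentence remark for the boundary case $\alpha = 1$.
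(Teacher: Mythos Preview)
Your proposal is correct and follows essentially the same route as the paper: set $\mbar=\lceil\alpha m\rceil$, invoke Theorem~\ref{thm:smartselect-approx_special}, and bound $m/\lceil\alpha m\rceil\le 1/\alpha$. The only cosmetic difference is that the paper restarts from the intermediate expression $\lceil m/(\mbar+1)\rceil+m/(\mbar(\mbar+1))$ in the theorem's proof rather than quoting the final bound $\epsilon+m/\mbar$, but both collapse to $1+m/\mbar$ in one line. Your extra care about the hypothesis $m>\mbar$ is warranted (and in fact the paper glosses over it); note, however, that even for $\alpha\in(0,1)$ one can have $\lceil\alpha m\rceil=m$ for small $m$ (e.g., $m=2$, $\alpha=0.6$), so the boundary remark applies a bit more broadly than just $\alpha=1$.
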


\begin{proof}
Putting $\mbar = \lceil\alpha m\rceil$, and proceeding as in the proof of Part 2 of Theorem~\ref{thm:smartselect-approx_special}, the approximation ratio can be upper bounded by

\begin{align}
\left\lceil\frac{m}{\mbar + 1}\right\rceil+\frac{m}{\mbar(\mbar + 1)}&\leq 1+\frac{m}{\mbar + 1}+\frac{m}{\mbar(\mbar + 1)}\nonumber\\
&= 1 + \frac{m}{\mbar}\nonumber\\
&= 1+ \frac{m}{\lceil\alpha m\rceil}\leq 1+\frac{m}{\alpha m}=1+\frac{1}{\alpha}\nonumber
\end{align}
\end{proof}

\begin{theorem}
\label{thm:smartselect-approx_general}
Suppose $\frac{c_s}{c_r}\geq \mbar(\mbar+1)(h_{\max}-1)$, for some $\mbar\in\mathbb{N}$, $m > \mbar \geq 1$. Then, for the general class of MSSN-MC-HC problems, the worst case approximation guarantee of the SmartSelect algorithm is upper bounded by $\max\{\epsilon + \frac{m}{\mbar},\frac{m}{2}\left(1+\frac{1}{\mbar(\mbar + 1)}\right)\}$, where $\epsilon\in[0,1)$ is as defined in Theorem~\ref{thm:smartselect-approx_special}.
\end{theorem}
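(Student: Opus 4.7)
My plan is to dispatch the theorem by a case split on the number of sinks $N^{\ast}$ used in an optimal solution, and to use Theorem~\ref{thm:smartselect-approx_special} in one case and the Proposition~2 style trivial bound in the other.

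Case 1 ($N^{\ast}=1$): the optimum itself is a single-sink feasible solution, so the hypothesis of Theorem~\ref{thm:smartselect-approx_special} is met, and that theorem directly yields the bound $\epsilon + \frac{m}{\mbar}$, which is the first term inside the $\max$.

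Case 2 ($N^{\ast}\geq 2$): every feasible solution now uses at least two sinks, so the optimum cost is bounded below by $2 c_s$. The worst-case SmartSelect cost, exactly as isolated in the proof of Proposition~2, is upper bounded by $m c_s + m(h_{\max}-1) c_r$. Dividing and using the hypothesis $\frac{c_s}{c_r}\geq \mbar(\mbar+1)(h_{\max}-1)$ yields
\begin{align}
\frac{m c_s + m(h_{\max}-1) c_r}{2 c_s}
&= \frac{m}{2} + \frac{m(h_{\max}-1) c_r}{2 c_s} \nonumber \\
&\leq \frac{m}{2}\left(1 + \frac{1}{\mbar(\mbar+1)}\right), \nonumber
\end{align}
which is the second term inside the $\max$. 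Taking the maximum of the two case bounds proves the theorem.

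The two cases are independent and I do not anticipate a serious obstacle. Case~1 is a direct invocation of Theorem~\ref{thm:smartselect-approx_special}, and Case~2 is essentially the arithmetic manipulation already used in Proposition~2 with the lower bound on the optimum sharpened from $c_s$ to $2c_s$. The only point worth confirming is that the lower bound $2c_s$ in Case~2 really holds; since it comes merely from counting the sink-cost contribution of the two (or more) sinks that any feasible solution in this case must include, no hidden issue arises.
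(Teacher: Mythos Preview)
Your approach is essentially the same as the paper's: a two-case split, with Theorem~\ref{thm:smartselect-approx_special} handling the single-sink side and the Proposition~2 style bound (with the denominator sharpened to $2c_s$) handling the other side. The arithmetic in Case~2 is identical to the paper's.

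One logical slip to correct: in Case~2 you assert that ``every feasible solution now uses at least two sinks,'' and you repeat this at the end when justifying the $2c_s$ lower bound. This does \emph{not} follow from $N^\ast\geq 2$. It is entirely possible that some (suboptimal) single-sink feasible solution exists while the optimum prefers two sinks with fewer relays; the hypothesis $c_s/c_r\geq \mbar(\mbar+1)(h_{\max}-1)$ does not rule this out when $m>\mbar(\mbar+1)$. Fortunately you do not need that claim: since $N^\ast\geq 2$ means the \emph{optimum itself} contains at least two sinks, its cost is already $\geq 2c_s$, which is all the argument requires. The paper sidesteps this by splitting not on $N^\ast$ but on whether \emph{any} single-sink feasible solution exists, so that in its second case the statement ``every feasible solution uses at least two sinks'' is true by definition. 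Either split works; just make sure the justification for the $2c_s$ lower bound matches the split you actually use.
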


\begin{proof}
From Theorem~\ref{thm:smartselect-approx_special}, we know that for the subclass of problems where there exists a feasible solution with exactly one sink, the worst case approximation guarantee is upper bounded by $\epsilon + \frac{m}{\mbar}$.

Now consider the remaining class of problems, i.e., the class of problems where any feasible solution uses at least two sinks. Then the optimal solution has cost at least $2c_s$, and hence the worst case approximation guarantee for this class of problems is trivially upper bounded by

\begin{align}
\frac{mc_s\:+\:m(h_{\max}-1)c_r}{2c_s}&= \frac{m}{2}\:+\:m(h_{\max}-1)\frac{c_r}{2c_s}\nonumber\\
&\leq \frac{m}{2}\:+\:\frac{m}{2}\frac{1}{\mbar(\mbar + 1)}\nonumber\\
&= \frac{m}{2}\left(1+\frac{1}{\mbar(\mbar + 1)}\right)\nonumber
\end{align}

The claim follows by combining the upper bounds for the two classes. 
\end{proof}

\subsection{A Destroy and Repair Heuristic to Improve upon SmartSelect}
\label{subsec:d-r}
We propose below a polynomial time heuristic to improve upon the solution provided by the SmartSelect algorithm. The heuristic works by iteratively detroying part of the current solution, and rebuilding the solution by exploring other parts of the search space. Such ideas have been used before to solve hard combinatorial problems; see, for example, \cite{Costa}. We list the detailed steps below.

\gap
\noindent
\textbf{Destroy and Repair Heuristic}
\begin{enumerate}
 \item Let $N^{(0)}$ be the outcome of the SmartSelect algorithm. $N^{(0)}$ is the restriction of the initial network graph $G$ to the sources, selected sinks and selected relays. Set $k = 0$. Also set $N_{best}= N^{(0)}$, and $solution\_update = false$. Let $K$ be the maximum number of iterations allowed.
 \item For each sink $b_j$ in $N^{(k)}$, do the following:
 \begin{itemize}
  \item Pretend to prune the sink $b_j$.
  \item Run the SmartSelect algorithm using \emph{only the remaining sinks in $N^{(k)}$}, and \emph{all} potential relays to obtain a solution $N_1$.
  \begin{itemize}
  \item If $N_1$ is feasible, and cost of $N_1$ is better than cost of $N_{best}$, set $N_{best}=N_1$, and $solution\_update = true$. Else go to the next step.
  \end{itemize}
  \item Pretend to prune the sink $b_j$, and Run the SmartSelect algorithm using \emph{all} the remaining sinks and relays in $G$ to obtain a solution $N_2$.
  \begin{itemize}
   \item If $N_2$ is feasible, and cost of $N_2$ is better than cost of $N_{best}$, set $N_{best}=N_2$, and $solution\_update = true$.
  \end{itemize}
 \end{itemize}
\item After all the sinks in $N^{(k)}$ have been tried (for pruning), if $solution\_update = true$, set $k\leftarrow k+1$, $N^{(k)}=N_{best}$, and go to Step 2.
\item \textbf{Stop} when no further solution update is possible, or the maximum number of iterations have been exceeded. 
\end{enumerate}

\remark Since each iteration of the Destroy and Repair heuristic uses the SmartSelect algorithm which is polynomial time, and since the number of iterations is upper bounded by a constant $K$, it follows that the heuristic is polynomial time. 

\section{Node Cut based ILP Formulation for the MSSN-MC-HC Problem}
\label{sec:ilp-oneconnect}
We shall formulate the MSSN-MC-HC problem as an ILP, using certain node cut inequalities (the approach almost mimicks the one presented in \cite{fullpaper}). Such a formulation will be useful when the number of potential sink and relay locations is considerably large so that a complete enumeration of all possible solutions to obtain the optimal solution (for comparison against the solution provided by the SmartSelect and Destroy-Repair algorithm) is impractical; in such cases, we can solve the LP relaxation of the ILP to obtain a lower bound on the optimal solution for comparison with the SmartSelect and Destroy-Repair outcome. 

\noindent
We start with a couple of definitions.

\begin{definition}
Given a source and a sink in a graph, a \emph{\textbf{node cut} for that source-sink pair} is defined as a set of nodes whose deletion disconnects the source from the sink. 
\end{definition}

\begin{definition}
A \emph{\textbf{minimal node cut} for a source-sink pair} is a node cut which does not contain any other node cut as its subset. 
\end{definition}

Let us make the following construction.

\begin{enumerate}
\item Augment the graph $G$ as follows:
\begin{itemize}
\item Introduce a virtual sink, $b_0$. Let $\tilde{V}=V\cup \{b_0\}$. Let $Q\leftarrow Q\cup \{b_0\}$.
\item Introduce a set of new edges $E^{\prime}=\{(b_0,b_j),1\leq j\leq |B|\}$. Let $\tilde{E}=E\cup E^{\prime}$.
\item Denote the augmented graph by $\tilde{G}=(\tilde{V},\tilde{E})$.
\end{itemize}
\item Using the terminology of RST-MR-HC problem \cite{fullpaper}, define $\tilde{R}= R\cup B$ as the set of potential relay locations, where now, the relays have \emph{non-uniform} costs; each relay in $R$ has cost $c_r$, and each relay in $B$ has cost $c_s$. 
\end{enumerate}

The original sink and relay placement problem now reduces to the following: obtain a tree spanning $Q$, rooted at $b_0$, using a minimum cost subset of $\tilde{R}$, such that the path from each source to $b_0$ has $\leq h_{\max}+1$ hops. 

Consider the graph $\tilde{G} = (Q\cup \tilde{R}, \tilde{E})$. We use the shorthand $0$ to denote $b_0$. We define, $\forall k\in Q\backslash\{0\}$, $\forall j\in \tilde{V}\backslash\{k,0\}$,
\begin{equation*}
y_{j,k}=\left\{
\begin{array}{rl}
1 & \text{if node $j$ is selected to connect source $k$ to the virtual sink}\\
0 & \text{otherwise}
\end{array}\right.
\end{equation*}

Let $\mathcal{P}_k, k\in Q\backslash\{0\}$, denote the set of paths from source $k$ to the sink in the graph $\Gt$. A path $p_k\in \mathcal{P}_k$ from source $k$ to sink is said to be \emph{selected} if $y_{j,k}=1\quad\forall j\in p_k$. A source $k$ is said to be connected to the sink if at least one of the paths in $\mathcal{P}_k$ is selected.
 
\begin{theorem}
The following condition is both \emph{necessary and sufficient} for connectivity of all the sources to the virtual sink:

\begin{equation}
\sum_{j\in \gamma}y_{j,k}\geq 1\quad \forall \gamma\in\Gamma^k;\forall k\in Q\backslash\{0\}
\label{eqn:node-cut-ineq}
\end{equation}
where, $\Gamma^k$ is the set of minimal node cuts for a source node $k$.
\end{theorem}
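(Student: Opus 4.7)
The plan is to prove both directions separately. The necessity direction is a direct consequence of the definition of a node cut: any cut must meet every path, so a selected path forces at least one selected node in every cut. The sufficiency direction is the contrapositive of a standard ``cut equals unselected nodes'' argument.

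For necessity, I would assume that every source $k\in Q\backslash\{0\}$ is connected to the virtual sink $b_0$. By definition, this means there exists a path $p_k\in\mathcal{P}_k$ such that $y_{j,k}=1$ for all internal nodes $j\in p_k$. Fix any minimal node cut $\gamma\in\Gamma^k$. Since $\gamma$ disconnects $k$ from $0$ in $\Gt$, every path from $k$ to $0$ must pass through at least one node of $\gamma$; in particular $p_k$ does. Pick any such node $j^*\in p_k\cap \gamma$. Then $y_{j^*,k}=1$, which immediately gives $\sum_{j\in\gamma}y_{j,k}\geq 1$, establishing the inequality for every $k$ and every $\gamma\in\Gamma^k$.

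For sufficiency, I would argue by contradiction. Fix $k\in Q\backslash\{0\}$ and suppose the inequalities in \eqref{eqn:node-cut-ineq} all hold, but no path in $\mathcal{P}_k$ is selected; that is, there is no path from $k$ to $0$ in $\Gt$ using only nodes with $y_{j,k}=1$. Let $C=\{j\in \tilde{V}\backslash\{k,0\}: y_{j,k}=0\}$ be the set of ``unselected'' intermediate nodes. The failure of connectivity in the selected subgraph implies that deleting $C$ disconnects $k$ from $0$ in $\Gt$; hence $C$ is a node cut for the pair $(k,0)$. Every node cut contains some minimal node cut as a subset, so there exists $\gamma^*\in \Gamma^k$ with $\gamma^*\subseteq C$. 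But then $y_{j,k}=0$ for every $j\in \gamma^*$, giving $\sum_{j\in \gamma^*}y_{j,k}=0$, which contradicts \eqref{eqn:node-cut-ineq}. Hence at least one path in $\mathcal{P}_k$ must be selected, and source $k$ is connected to $b_0$. Applying this for each $k$ finishes the proof.

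I do not expect a real obstacle here; the only subtlety worth stating carefully is the passage from an arbitrary disconnecting set $C$ to a \emph{minimal} node cut $\gamma^*\subseteq C$, which is what the hypothesis actually quantifies over. This step is justified simply by removing nodes from $C$ one at a time as long as the remaining set still disconnects $k$ from $0$; the resulting set is, by construction, a minimal node cut contained in $C$, and since all of $C$ is unselected, so is $\gamma^*$.
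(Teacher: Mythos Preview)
Your proof is correct and complete. The paper itself does not spell out the argument at all --- it simply cites \cite{fullpaper} and says the proof is identical --- so your proposal in fact supplies the details the paper omits. The two-direction argument you give (selected path meets every cut; unselected nodes form a cut, hence contain a minimal cut) is the standard one and is almost certainly what \cite{fullpaper} does as well; in particular, your explicit justification of the passage from an arbitrary disconnecting set $C$ to a minimal cut $\gamma^*\subseteq C$ is exactly the point that needs care, and you handle it correctly.
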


\begin{proof}
The proof is exactly same as given in \cite{fullpaper}.
\end{proof}

Denote by $c_j$, the cost of node $j\in\Rt$. We now formulate the ILP as follows:

\begin{align}
\min \quad\sum_{j\in \Rt}c_jy_j\label{obj:ilp}\\
\text{Subject to:}\sum_{j\in \gamma}y_{j,k}&\geq 1\quad \forall \gamma\in\Gamma^k;\forall k\in Q\backslash\{0\}\label{constr:conn}\\
y_j &\geq y_{j,k}\quad \forall j\in \Rt;\forall k\in Q\backslash\{0\}\label{constr:nodeselect}\\
\sum_{j\in \Vt\backslash\{k,0\}}y_{j,k}&\leq h_{\max}\quad\forall k\in Q\backslash\{0\}\label{constr:hop}\\
y_{j,k}&\in \{0,1\}\quad \forall k\in Q\backslash\{0\};\forall j\in \Vt\backslash\{k,0\}\label{constr:int1}\\
y_j &\in \{0,1\}\quad \forall j\in \Rt\label{constr:int2} 
\end{align}

Constraint \eqref{constr:conn} in the above formulation ensures connectivity from each source to the virtual sink; constraint \eqref{constr:nodeselect} simply says that a node in $\Rt$ gets selected if it is selected for the path of at least one source; constraint \eqref{constr:hop} ensures that a selected path from a source to the virtual sink has no more than $h_{\max}+1$ hops; constraints \eqref{constr:int1} and \eqref{constr:int2} are the integer constraints on the node selection variables. The objective function \eqref{obj:ilp} simply minimizes the total cost of the nodes selected.

We shall now show that the optimum value of the objective function for the ILP is indeed the same as the optimum solution (i.e., the minimum cost of selected sinks and relays) to the original MSSN-MC-HC problem.

To do that, we introduce the following notations:

\begin{description}
\item $\F = \{\underline{y}=\{\{y_{j,k}\}_{j\in \Vt\backslash\{k,0\},k\in Q\backslash\{0\}},\{y_j\}_{j\in \Rt}\}: \underline{y} \text{ satisfies constraints \eqref{constr:conn}-\eqref{constr:int2}}\}$: set of all feasible solutions to the ILP

\item $\P_k^{'} = \{p_k: \text{$p_k$ consists of $\leq h_{\max}+1$ hops from source $k$ to the virtual sink}\}\subseteq \P_k$: set of all hop count feasible paths from source $k$ to the virtual sink

\item $\U_0 = \{\underline{g}\triangleq \{p_k\}_{k=1}^{|Q|-1}: p_k\in \P_k^{'}\}$: all possible combinations of hop count feasible paths from the sources to the virtual sink
\end{description}

Define a set $\F_0$ in a \emph{one-to-one correspondence} to the set $\U_0$ as follows:

For each $\underline{g}=\{p_k\}_{k=1}^{|Q|-1}\in \U_0$, define $\underline{x}(\underline{g})=\{\{x_{j,k}\}_{j\in \Vt\backslash\{k,0\},k\in Q\backslash\{0\}},\{x_j\}_{j\in \Rt}\}\in\F_0$ such that

\begin{equation*}
x_{j,k}=\left\{
\begin{array}{rl}
1 & \text{if $j\in p_k$}\\
0 & \text{otherwise}
\end{array}\right.
\end{equation*}
\begin{equation*}
x_{j}=\left\{
\begin{array}{rl}
1 & \text{if $x_{j,k}=1$ for some $k\in Q\backslash\{0\}$}\\
0 & \text{otherwise}
\end{array}\right.
\end{equation*}

\begin{lemma}
$\F_0\subseteq \F$
\end{lemma}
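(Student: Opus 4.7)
The plan is to take an arbitrary element $\underline{x} = \underline{x}(\underline{g}) \in \F_0$, where $\underline{g} = \{p_k\}_{k \in Q \setminus \{0\}}$ is a collection of hop-count-feasible source-to-virtual-sink paths in $\Gt$, and verify directly that $\underline{x}$ satisfies each of the ILP constraints \eqref{constr:conn}–\eqref{constr:int2}. This is a constraint-by-constraint check; no optimization or combinatorial cleverness beyond the definitions is required.

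First I would dispose of the easy constraints. Constraints \eqref{constr:int1} and \eqref{constr:int2} are immediate: by its defining formula, every entry of $\underline{x}(\underline{g})$ lies in $\{0,1\}$. Constraint \eqref{constr:nodeselect} is also built into the definition, since $x_j$ is set to $1$ whenever some $x_{j,k}=1$, so $x_j \geq x_{j,k}$ pointwise for all $j \in \Rt$ and $k \in Q\setminus\{0\}$. For the hop-count constraint \eqref{constr:hop}, I would use the fact that $p_k \in \P_k^{'}$ has at most $h_{\max}+1$ hops, and hence passes through at most $h_{\max}+2$ nodes in total; excluding the two endpoints $k$ and $0$ leaves at most $h_{\max}$ intermediate nodes, which are exactly the $j \in \Vt\setminus\{k,0\}$ with $x_{j,k}=1$, giving $\sum_{j\in\Vt\setminus\{k,0\}} x_{j,k} \leq h_{\max}$.

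The one step that actually uses structure rather than bookkeeping is verifying the node-cut inequality \eqref{constr:conn}, and this is where I would expect to spend the most care. Fix a source $k$ and any minimal node cut $\gamma \in \Gamma^k$. By definition of a node cut for the source-sink pair $(k,0)$, deleting the nodes of $\gamma$ from $\Gt$ destroys every $k$-to-$0$ path; in particular it destroys $p_k$. Hence $p_k$ must contain at least one node of $\gamma$, and for that node $j$ we have $x_{j,k}=1$, which yields $\sum_{j\in\gamma} x_{j,k} \geq 1$. Note that only the ``path crosses every cut'' direction of the theorem preceding the ILP is used here; the harder reverse implication is not needed for this lemma.

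Putting the four verifications together shows $\underline{x}(\underline{g}) \in \F$ for every $\underline{g} \in \U_0$, so $\F_0 \subseteq \F$ as claimed. The lemma is really a setup result: it is one direction of the correspondence between hop-feasible path collections and ILP-feasible vectors, and it is presumably paired with a matching statement about costs (or a reverse containment on projections) to conclude that the ILP optimum equals the MSSN-MC-HC optimum.
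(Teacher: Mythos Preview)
Your proposal is correct and follows exactly the approach the paper takes: the paper's proof is the single line ``Verify that any $\underline{x}\in \F_0$ satisfies constraints \eqref{constr:conn}--\eqref{constr:int2},'' and you have simply carried out that verification constraint by constraint. Your handling of each constraint, including the node-cut inequality via the ``every path meets every cut'' observation, is sound.
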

\begin{proof}
Verify that any $\underline{x}\in \F_0$ satisfies constraints \eqref{constr:conn}-\eqref{constr:int2}.
\end{proof}

\begin{corollary}
\label{cor:ilpcor1}
$\min_{\underline{y}\in\F}\sum_{j\in \Rt}c_jy_j\leq \min_{\underline{x}\in\F_0}\sum_{j\in \Rt}c_jx_j$
\end{corollary}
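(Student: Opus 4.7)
The proof is essentially a one-line consequence of the preceding lemma, so the plan is quite short. The statement says that the minimum of the ILP objective over the feasible set $\F$ is no larger than the minimum over the set $\F_0$, where $\F_0$ is the subset of $\F$ obtained by the explicit construction from path-tuples in $\U_0$.

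First, I would invoke the preceding lemma, which establishes $\F_0 \subseteq \F$. This inclusion is the only non-trivial ingredient, and it has already been proved by verifying the constraints \eqref{constr:conn}--\eqref{constr:int2} hold for each $\underline{x}(\underline{g})$ arising from a hop-count-feasible combination of source-to-sink paths. Next, I would invoke the standard fact from optimization that if $B \subseteq A$ and $f : A \to \mathbb{R}$ is any function, then
\begin{equation*}
\min_{a \in A} f(a) \;\leq\; \min_{b \in B} f(b),
\end{equation*}
since every candidate minimizer in $B$ is also a candidate in $A$. Applying this with $A = \F$, $B = \F_0$, and the objective $f(\underline{y}) = \sum_{j \in \Rt} c_j y_j$ yields the desired inequality. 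A small remark worth including is that the objective depends only on the node-selection variables $\{y_j\}_{j \in \Rt}$, which are present as coordinates of every element of both $\F$ and $\F_0$ under the construction above, so the comparison is well-defined.

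There is no real obstacle here: the lemma $\F_0 \subseteq \F$ has already absorbed all the work, and the corollary is the immediate monotonicity-of-$\min$-under-set-inclusion observation. The only thing I would be careful about is to make the writeup a single clean sentence rather than obscure the triviality with notation.
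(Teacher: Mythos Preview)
Your proposal is correct and matches the paper's approach: the paper states the corollary immediately after the lemma $\F_0 \subseteq \F$ with no separate proof, treating it as the obvious monotonicity-of-$\min$-under-set-inclusion consequence you describe. There is nothing to add.
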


Observe that in Corollary~\ref{cor:ilpcor1}, the \emph{L.H.S is the optimum objective function value of the ILP}, whereas the \emph{R.H.S is the optimum solution for the MSSN-MC-HC problem}. Thus, we have proved that the optimum solution to the ILP is a lower bound to the optimum solution to MSSN-MC-HC problem.

\begin{lemma}
For each $\underline{y}\in\F$, $\exists \:\underline{x}\in\F_0$ such that 
\begin{enumerate}
\item $x_{j,k}\leq y_{j,k}\:\forall j,\forall k$, and hence
\item $\sum_{j\in \Rt}c_jx_j\leq \sum_{j\in \Rt}c_jy_j$
\end{enumerate}
\end{lemma}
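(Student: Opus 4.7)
The plan is to exhibit such an $\underline{x}$ explicitly by choosing, for each source, a path through the ``support'' of $\underline{y}$. For each $k\in Q\backslash\{0\}$, let $S_k := \{k,0\}\cup\{j\in \Vt\backslash\{k,0\}:y_{j,k}=1\}$ and let $H_k$ denote the subgraph of $\Gt$ induced by $S_k$. I would then pick any simple $k$-to-$0$ path $p_k$ in $H_k$ and set $\underline{x} = \underline{x}(\{p_k\}_{k\in Q\backslash\{0\}})\in\F_0$.

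The key step will be to show that $k$ and $0$ are actually connected in $H_k$, so that such a path $p_k$ exists. I would argue by contradiction: if $k$ and $0$ were in different components of $H_k$, then $N := \Vt\backslash S_k = \{j\in\Vt\backslash\{k,0\}:y_{j,k}=0\}$ would be a node cut for the pair $(k,0)$ in $\Gt$, since deleting $N$ from $\Gt$ leaves precisely $H_k$, in which $k$ and $0$ are disconnected. Every node cut contains a minimal node cut $\gamma\in\Gamma^k$ as a subset, and $\gamma\subseteq N$ would force $y_{j,k}=0$ for all $j\in\gamma$, contradicting constraint~\eqref{constr:conn}. Hence $k$ and $0$ must lie in the same component of $H_k$.

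Once $p_k$ is chosen, I would verify that $p_k\in\P_k^{'}$. Every internal node of $p_k$ lies in $S_k\backslash\{k,0\}$ and so satisfies $y_{j,k}=1$; hence the number of internal nodes is bounded by $\sum_{j\in\Vt\backslash\{k,0\}}y_{j,k}\leq h_{\max}$ by constraint~\eqref{constr:hop}, giving hop count at most $h_{\max}+1$. Thus $\{p_k\}_{k}\in\U_0$ and $\underline{x}\in\F_0$. Part~(1) is now immediate: $x_{j,k}=1$ iff $j\in p_k$, and by construction every such $j$ has $y_{j,k}=1$, so $x_{j,k}\leq y_{j,k}$. For part~(2), it suffices to show $x_j\leq y_j$ for every $j\in\Rt$; if $x_j=1$ then $x_{j,k}=1$ for some $k$, whence $y_{j,k}=1$, and then constraint~\eqref{constr:nodeselect} forces $y_j=1$. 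Since the costs $c_j$ are nonnegative, summing over $j\in\Rt$ yields $\sum_{j\in\Rt}c_j x_j\leq \sum_{j\in\Rt}c_j y_j$.

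The only nontrivial ingredient is the Menger-style step that turns the node-cut inequalities into an honest path in $H_k$; everything else is routine bookkeeping against the constraints of the ILP. I expect this step to parallel closely the necessity-sufficiency argument already invoked in the earlier theorem that established \eqref{eqn:node-cut-ineq}, so no serious new obstacle is anticipated.
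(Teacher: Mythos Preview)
Your proposal is correct and follows essentially the same approach as the paper: construct, for each source $k$, a path $p_k$ inside the support of $\underline{y}$, use constraints~\eqref{constr:conn} and~\eqref{constr:hop} to guarantee existence and hop-feasibility, and then read off coordinate-wise domination. Your write-up is in fact more explicit than the paper's, which merely asserts that such $p_k$ can be built; your Menger-style contradiction makes that step precise, and your use of constraint~\eqref{constr:nodeselect} to obtain $x_j\le y_j$ is a slightly cleaner route than the paper's $\max_k$ argument.
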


\begin{proof}
Given $\underline{y}\in \F$, we can construct paths $p_k\in\P_k^{'},\:k\in Q\backslash\{0\}$ such that $\underline{g}=\{p_k\}_{k=1}^{|Q|-1}\in \U_0$. In doing this, we require constraints \eqref{constr:conn} and \eqref{constr:hop} in the definition of $\F$. Now obtain $\underline{x}\in \F_0$ for this $\underline{g}=\{p_k\}_{k=1}^{|Q|-1}\in \U_0$. Observe that $x_{j,k}\leq y_{j,k}\:\forall j,\forall k$.

Also, since the variables are binary, this implies that $\max_{k\in Q\backslash\{0\}}x_{j,k}\leq \max_{k\in Q\backslash\{0\}}y_{j,k}$ for all $j\in \Rt$, i.e., $x_j\leq y_j\:\forall j\in \Rt$. For otherwise, suppose $\max_{k\in Q\backslash\{0\}}x_{j,k} > \max_{k\in Q\backslash\{0\}}y_{j,k}$ for some $j\in \Rt$. Then that would imply, $\max_{k\in Q\backslash\{0\}}x_{j,k}=1$ and $\max_{k\in Q\backslash\{0\}}y_{j,k}=0$, i.e., for that $j\in \Rt$, $\exists \:k\in Q\backslash\{0\}$ such that $x_{j,k}=1$ and $y_{j,k}=0$. But this contradicts the fact that $x_{j,k}\leq y_{j,k}\:\forall j,\forall k$. Hence the conclusion.

Therefore, it follows that $\sum_{j\in \Rt}c_jx_j\leq \sum_{j\in \Rt}c_jy_j$, since $c_j\geq 0$ for all $j\in\Rt$.  
\end{proof} 

\begin{corollary}
\label{cor:ilpcor2}
\begin{equation*}
\min_{\underline{y}\in\F}\sum_{j\in \Rt}c_jy_j\geq \min_{\underline{x}\in\F_0}\sum_{j\in \Rt}c_jx_j
\end{equation*}
\end{corollary}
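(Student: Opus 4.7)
The plan is to get this inequality essentially for free from the lemma immediately preceding it. That lemma already does the substantive work: it shows that any feasible ILP solution $\underline{y}\in\F$ can be mapped to some $\underline{x}\in\F_0$ whose cost does not exceed that of $\underline{y}$. The corollary is then just the statement one obtains by minimizing both sides.

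Concretely, I would proceed as follows. First, let $\underline{y}^\ast$ be a minimizer of $\sum_{j\in\Rt}c_jy_j$ over $\F$; such a minimizer exists because $\F$ is described by finitely many linear constraints on binary variables. Second, invoke the preceding lemma on $\underline{y}^\ast$ to produce a point $\underline{x}^\ast\in\F_0$ satisfying $\sum_{j\in\Rt}c_jx^\ast_j \leq \sum_{j\in\Rt}c_jy^\ast_j$. Third, observe that by definition of the minimum over $\F_0$,
\begin{equation*}
\min_{\underline{x}\in\F_0}\sum_{j\in\Rt}c_jx_j \;\leq\; \sum_{j\in\Rt}c_jx^\ast_j \;\leq\; \sum_{j\in\Rt}c_jy^\ast_j \;=\; \min_{\underline{y}\in\F}\sum_{j\in\Rt}c_jy_j,
\end{equation*}
which is exactly the claimed inequality.

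There is no real obstacle here; all the nontrivial content (building hop-count-feasible paths from a feasible ILP solution, and verifying that the induced node-selection variables are dominated componentwise by the original ones) was already discharged in the lemma. The only thing to be careful about is existence of the minimizers on both sides, but this is immediate since $\F$ and $\F_0$ are nonempty finite sets of $\{0,1\}$-vectors whenever the instance is feasible.

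Finally, I would point out the payoff: combining this corollary with Corollary~\ref{cor:ilpcor1} yields
\begin{equation*}
\min_{\underline{y}\in\F}\sum_{j\in\Rt}c_jy_j \;=\; \min_{\underline{x}\in\F_0}\sum_{j\in\Rt}c_jx_j,
\end{equation*}
so the optimum of the ILP \eqref{obj:ilp}--\eqref{constr:int2} coincides with the optimum of the MSSN-MC-HC problem on the augmented graph $\Gt$. This justifies using the LP relaxation of the ILP as a valid lower bound against which SmartSelect and the Destroy-Repair heuristic can be benchmarked, which is the reason for setting up the node-cut formulation in the first place.
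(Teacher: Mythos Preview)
Your proof is correct and follows essentially the same route as the paper: take a minimizer $\underline{y}^\ast\in\F$, apply the preceding lemma to obtain $\underline{x}^\ast\in\F_0$ with no larger cost, and bound the minimum over $\F_0$ above by the cost of $\underline{x}^\ast$. The additional remarks on existence of minimizers and on combining with Corollary~\ref{cor:ilpcor1} are fine but go slightly beyond what the paper records for this corollary.
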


\begin{proof}
Suppose $\underline{y}_{opt}=\arg\min_{\underline{y}\in\F}\sum_{j\in \Rt}c_jy_j$. Then, by the above lemma, $\exists \underline{x}^{'}\in\F_0$ such that $\sum_{j\in \Rt}c_jx^{'}_j\leq \sum_{j\in \Rt}c_jy_{opt,j}$. But clearly, $\min_{\underline{x}\in\F_0}\sum_{j\in \Rt}c_jx_j\leq \sum_{j\in \Rt}c_jx^{'}_j$. Hence the proof.
\end{proof}

\begin{theorem}
\label{thm:ilp-main}
\begin{equation*}
\min_{\underline{y}\in\F}\sum_{j\in \Rt}c_jy_j\:=\:\min_{\underline{x}\in\F_0}\sum_{j\in \Rt}c_jx_j
\end{equation*}
\end{theorem}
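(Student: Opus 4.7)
The plan is to observe that this final theorem is an immediate consequence of the two corollaries already established in the excerpt, namely Corollary~\ref{cor:ilpcor1} and Corollary~\ref{cor:ilpcor2}. All the real work has already been done in setting up the bijection/inclusion $\F_0 \subseteq \F$ and in showing that for every $\underline{y}\in\F$ one can extract a path combination $\underline{g}\in\U_0$ whose induced $\underline{x}(\underline{g})\in\F_0$ satisfies $x_{j,k}\leq y_{j,k}$ component-wise.

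First, I would simply invoke Corollary~\ref{cor:ilpcor1}, which gives the direction
\[
\min_{\underline{y}\in\F}\sum_{j\in \Rt}c_jy_j\leq \min_{\underline{x}\in\F_0}\sum_{j\in \Rt}c_jx_j.
\]
This direction rests on the fact that $\F_0\subseteq \F$, so minimising over the larger set $\F$ cannot produce a larger value than minimising over the subset $\F_0$. Next, I would invoke Corollary~\ref{cor:ilpcor2}, which gives the reverse direction
\[
\min_{\underline{y}\in\F}\sum_{j\in \Rt}c_jy_j\geq \min_{\underline{x}\in\F_0}\sum_{j\in \Rt}c_jx_j,
\]
the content here being that the path-extraction lemma lets us dominate the optimal $\underline{y}_{opt}\in\F$ by some $\underline{x}'\in\F_0$ of no greater cost, so the $\F_0$-minimum cannot exceed the $\F$-minimum.

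Combining the two inequalities yields the equality claimed in the theorem. Consequently I would present the proof as a two-line argument: ``The result follows immediately from Corollary~\ref{cor:ilpcor1} and Corollary~\ref{cor:ilpcor2}.'' There is no real obstacle here, since the genuine technical content, ensuring via constraints \eqref{constr:conn} and \eqref{constr:hop} that each feasible $\underline{y}$ yields hop-feasible source-to-sink paths, and checking that the binary integrality forces $x_j\leq y_j$, has already been absorbed into the supporting lemmas. The only thing to be careful about in the write-up is that both minima are attained (they are, since $\F$ is finite and $\F_0$ is finite and nonempty whenever the original problem is feasible, which is exactly when $\F\neq\emptyset$), so that the minimum notation is justified on both sides.
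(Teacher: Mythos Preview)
Your proposal is correct and matches the paper's own proof exactly: the paper simply states that the result follows by combining Corollary~\ref{cor:ilpcor1} and Corollary~\ref{cor:ilpcor2}. Your additional remarks about attainment of the minima and the role of constraints~\eqref{constr:conn} and~\eqref{constr:hop} are accurate elaborations but go beyond what the paper itself records.
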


\begin{proof}
The proof follows by combining Corollaries~\ref{cor:ilpcor1} and \ref{cor:ilpcor2}.
\end{proof}

Theorem~\ref{thm:ilp-main} states that the optimum value of the objective function for the ILP is indeed the same as the optimum solution to the original MSSN-MC-HC problem.

To solve the LP relaxation of this ILP to obtain a lower bound on the optimal solution, we use the algorithm presented in \cite{nigam} (with the Master problem being the ILP represented by Equations~\eqref{obj:ilp}-\eqref{constr:int2}), which uses as a sub-program (to find the node cut constraints iteratively), an algorithm presented by Garg et al.\ \cite{garg} in the context of node weighted multiway cuts.

\section{Numerical Results} 
\label{sec:results}

Since the MSSN-MC-HC problem is NP-Hard, computing the optimal solution would require, in general, an exhaustive search over all possible combinations of potential sinks and relays, which is clearly impractical. We, therefore, obtain a lower bound on the optimum cost of a problem instance by solving the LP relaxation of the ILP formulation for the MSSN-MC-HC problem. 

We compare the performance of our algorithms against this LP-based lower bound on the optimum cost, as well as the exponential search heuristic (GRASP-MSRP) proposed by Sitanyah et al.\ \cite{sitanayah} in three different experimental settings (obtained by varying one or more of the following: number of sources, potential relays and potential sinks, deployment area, communication range, hop constraint, and strategy for choosing the source node locations and potential node locations). In all the experiments, we chose $c_s = 10$, and $c_r = 1$; note that these choices satisfy the hypothesis on $c_s/c_r$ made in Theorems~\ref{thm:smartselect-approx_special} and \ref{thm:smartselect-approx_general} with $\mbar = 1$. The details of the experimental settings are provided in Table~\ref{tbl:expt-settings}, where, by ``random placement in grid'', we mean the following: the area is partitioned into square cells of side 10$m$. Consider the lattice created by the corner points of the cells. For each of the instances under this setting, the source nodes, the potential relay locations, and the potential sink locations are selected at random from these lattice points. Similarly, ``random placement without grid'' means that the source locations, potential relay and sink locations were picked according to a continuous uniform distribution over the area (without partitioning the area into grids).

\begin{table*}[t]
  \centering
\caption{Details of the Experimental Settings}
\label{tbl:expt-settings}
\scriptsize
  \begin{tabular}{|c|c|c|c|c|c|c|c|c|}\hline
    Setup & Instances & Sources & potential & potential & Locations & Area & $r_{\max}$ & $h_{\max}$\\
    & generated & & relays & sinks & & (in $m^2$) & (in meters) & \\
    \hline
   1 & 60 & 20 & 30 & 10 & random placement & 100$\times$ 100 & 20 & 5\\
     &    &    &    &    &  without grid    &               &    &   \\ 
  \hline
   2 & 60 & 40 & 50 & 15 & random placement & 140$\times$ 140 & 20 & 5\\
     &    &    &    &    &  without grid    &                 &    &   \\
  \hline
   3 & 60& 30 & 50 & 15 & random placement & 140$\times$ 140 & 30 & 5\\
     &    &    &    &    &  in grid         &                 &    &   \\
\hline
\end{tabular}
\normalsize
\end{table*}    

For ease of exposition, from now on, we use the abbreviations SS, DR, and GM respectively to denote the SmartSelect algorithm, the Destroy and Repair heuristic, and the GRASP-MSRP heuristic proposed in \cite{sitanayah}. For each instance in each setting, we ran all the three algorithms on that instance, and also computed the LP-based lower bound on the optimum solution for that instance. The maximum number of iterations in the DR heuristic was chosen to be 25. The experiments were run using MATLAB R2011b on a Linux based desktop with 8 GB RAM. Table~\ref{tbl:lp-compare} summarizes the performance of the algorithms as compared to the LP-based lower bound. For each of the feasible instances in each setting, we computed the empirical approximation ratio of each algorithm (with respect to the LP-based lower bound) as $\text{Approx. ratio} = \frac{\text{Cost of the algorithm outcome}}{\text{Cost of LP solution}}$. 
The maximum of these over the instances in that setting was taken as the empirical worst case approximation ratio for that setting. 

We also computed the empirical average case approximation ratio of the algorithms as follows: let $C^{(algo)}_{avg}$ be the average cost of the algorithm outcome over the feasible instances in a setting, and let $\overline{C}_{lp}$ be the average cost of the LP solution over those feasible instances. Then, the empirical average case approximation ratio, $\overline{\alpha}_{algo}$ was computed as $\overline{\alpha}_{algo}=\frac{C^{(algo)}_{avg}}{\overline{C}_{lp}}$. The theoretical upper bound on the worst case approximation ratio of SS algorithm was computed using Theorem~\ref{thm:smartselect-approx_general}.

\begin{table*}[t]
  \centering
\caption{Performance Comparison of the Algorithms against LP-based Lower Bound}
\label{tbl:lp-compare}
\scriptsize
  \begin{tabular}{|c|c|c|c|c|c|c|c|c|}\hline
    Experimental & feasible & \multicolumn{3}{c|}{Empirical} & \multicolumn{3}{c|}{Empirical} & Theoretical\\
    setup & instances & \multicolumn{3}{c|}{average case} & \multicolumn{3}{c|}{worst case} & worst case\\
   & & \multicolumn{3}{c|}{approx. ratio} & \multicolumn{3}{c|}{approx. ratio} & approx. ratio bound\\
   & & SS & DR & GM & SS & DR & GM & of SS\\
    \hline
   1 & 47 & 1.353 & 1.29& 1.255 & 2.88 & 2.88 & 2.72 & 20\\
   2 & 32 & 1.11 & 1.05 & 1.005 & 1.358& 1.205& 1.057& 40\\
   3 & 60& 1.09 & 1.012& 1.007 & 1.83 & 1.25 & 1.08 & 30 \\ 
  \hline
\end{tabular}
\normalsize
\end{table*} 
In Table~\ref{tbl:run-time}, we compare the running times of the algorithms and that of the LP-based lower bound computation.
\begin{table*}[t]
  \centering
\caption{Execution Times of the Algorithms, and the LP-based Lower Bound Computation}
\label{tbl:run-time}
\scriptsize
  \begin{tabular}{|c|c|c|c|c|c|c|c|c|}\hline
    Experimental &  \multicolumn{8}{c|}{Execution time in secs} \\
    setup & \multicolumn{2}{c|}{SS} & \multicolumn{2}{c|}{DR} & \multicolumn{2}{c|}{GM} & \multicolumn{2}{c|}{LP}\\
   & mean & max & mean & max & mean & max & mean & max\\
    \hline
   1 & 1.148 & 2.72 & 3.4565 & 15.0878 & 258.645 & 673.415 & 37.36 & 138.14\\
   2 & 4.5344 & 12.37 & 23.64 & 86.02 & 1885.8 & 6296.4 & 3819.7 & 41785\\
   3 & 5.173 & 11.46 & 12.095 & 34.922& 311.514& 638.15 & 489.039& 5395.3\\ 
  \hline
\end{tabular}
\normalsize
\end{table*}

From Tables~\ref{tbl:lp-compare} and \ref{tbl:run-time}, we make the following observations:
\begin{enumerate}
\item In all the experimental settings considered, the \emph{average} empirical performance of both the SS and DR algorithms in terms of cost are within a factor of about 1.4 of the LP based \emph{lower bound} on the optimum cost. Notice that the actual performance would be even better since we are only comparing against a lower bound on the optimum cost. In the worst case, the algorithms are off from the lower bound by a factor of about 2.9, which is still much better than the theoretically predicted performance bound for the SS algorithm in the corresponding setting. 
\item The performance of the DR algorithm is better than that of the SS algorithm as expected, although at the cost of a slightly higher running time.
\item Both the SS and the DR algorithms achieve \emph{orders of magnitude improvement in running time} compared to the LP (and hence, obviously with respect to the exact ILP). 
\item While the GM algorithm (\cite{sitanayah}) does marginally better than the SS and DR algorithms in terms of cost, the improvement comes at a heavy price in terms of running time. In all the settings considered, both the SS and DR algorithms are order of magnitude faster compared to the GM algorithm.
\end{enumerate}

This extremely fast running time makes SmartSelect and Destroy-Repair, an attractive choice for use in a Field-interactive iterative network design tool such as SmartConnect \cite{smartconnect-paper}. However, one might argue that this improvement in running time comes at the price of a degradation in performance (i.e., cost of the resulting solution). We, therefore, proceed to further quantify the degradation in cost when DR (or SS) algorithm is used instead of the GM algorithm. To this end, we first quantify, from our experimental data, the improvement in cost achieved by the DR heuristic over SS algorithm. Our findings are summarized in Table~\ref{tbl:dr-ss}.

\begin{table*}[t]
  \centering
\caption{Performance Comparison of the DR and SS Algorithms}
\label{tbl:dr-ss}
\scriptsize
  \begin{tabular}{|c|c|c|}\hline
    Experimental & Improvement in average cost & Maximum improvement in cost\\
    setup &  by DR over SS &  by DR over SS\\
          &   (in \%) & (in \%)\\
    \hline
   1 & 4.87 & 37.5\\
   2 & 5.65 & 29.3\\
   3 & 7.68 & 75\\ 
  \hline
\end{tabular}
\normalsize
\end{table*} 

Since we observe from Table~\ref{tbl:dr-ss} that the DR algorithm achieves an improvement in average cost of about 5\% to 8\% (and a maximum improvement of 75\% over all the instances) over the SS algorithm in all the settings considered, and has the same order of running time as the SS algorithm (as observed from Table~\ref{tbl:run-time}), we next compare the performance of the DR algorithm against that of the GM algorithm (\cite{sitanayah}) in the scenarios tested. Our findings are summarized in Table~\ref{tbl:dr-gm}.

\begin{table*}[t]
  \centering
\caption{Performance Comparison of the DR and GM Algorithms}
\label{tbl:dr-gm}
\scriptsize
  \begin{tabular}{|c|c|c|c|}\hline
    Experimental & Degradation in & Maximum degradation & Maximum improvement\\
      setup      &  average cost  &    in cost          &  in cost            \\
                 & of DR w.r.t GM (in \%)&   of DR w.r.t GM (in \%)   &  of DR w.r.t GM (in \%)     \\
    \hline
      1 & 2.76 & 33.33 & 11.54\\
      2 & 4.55 & 20.51 & 5.36\\
      3 & 0.52 & 25    & 7.69\\ 
  \hline
\end{tabular}
\normalsize
\end{table*} 

From Table~\ref{tbl:dr-gm}, we make a couple of observations:

\begin{enumerate}
\item In all the settings considered, the \emph{average} cost of the DR heuristic is within at most 4.6\% of that of the GM algorithm, and in the worst case (over all the tested scenarios in all the settings), the cost of the DR algorithm is off by 33.33\% from that of the GM algorithm. On the other hand, as can be observed from Table~\ref{tbl:run-time}, the improvement in average run time of the DR algorithm compared to that of the GM algorithm is up to a factor of about 80. 
\item A perhaps surprising observation is that in all the three settings considered, there were instances where the DR algorithm in fact did better than the (more complex) GM algorithm even in terms of cost, as indicated by the last column in Table~\ref{tbl:dr-gm}, and the improvement was upto 11.54\%. 
\end{enumerate}

In summary, we conclude that the DR heuristic (with the SS algorithm as its starting point) achieves significant improvement in running time compared to both the GM algorithm (\cite{sitanayah}), and ILP based solutions, while incurring only a small penalty in terms of cost. This extremely fast running time, and insignificant penalty in cost make the Destroy and Repair heuristic (with SmartSelect as starting point) an excellent choice for use in an iterative network design procedure such as SmartConnect \cite{smartconnect-paper}.

\section{Conclusion}
\label{sec:conclude}
In this paper, we have studied the problem of determining an optimal relay and sink node placement strategy such that certain performance objective(s) (in this case, hop constraint, which, under a lone-packet model, ensures data delivery to the BS within a certain maximum delay) is (are) met. We found that the problem is NP-Hard, and is even hard to approximate within a factor of $O(\ln m)$, where $m$ is the number of sources. We have proposed a polynomial time approximation algorithm for the problem. The algorithm is simple, intuitive, and as can be concluded from numerical experiments presented in Section~\ref{sec:results}, gives solutions of very good quality in extremely reasonable computation time. We have also provided worst case bound on the performance of the algorithm. 

Further, we are working on combining our algorithm with that proposed by Sitanayah et al. \cite{sitanayah} to further improve the cost efficiency of our algorithm while retaining the benefits of fast running time. 
  
\bibliographystyle{IEEE}
\bibliography{dit_astec}
\end{document}